
\documentclass[prd,tightenlines,nofootinbib,superscriptaddress,11pt]{revtex4}

\usepackage{amsfonts,amssymb,amsthm,bbm,amsmath}
\usepackage{color,psfrag}
\usepackage[dvips]{graphicx}

\topmargin -5mm

\newcommand{\C}{{\mathbb C}}
\newcommand{\N}{{\mathbb N}}
\newcommand{\R}{{\mathbb R}}

\newcommand{\cA}{{\mathcal A}}
\newcommand{\cB}{{\mathcal B}}

\newcommand{\cJ}{{\mathcal J}}

\newcommand{\cN}{{\mathcal N}}

\newcommand{\cC}{{\mathcal C}}

\newcommand{\SU}{\mathrm{SU}}
\newcommand{\SL}{\mathrm{SL}}
\newcommand{\GL}{\mathrm{GL}}

\newcommand{\U}{\mathrm{U}}

\newcommand{\be}{\begin{equation}}
\newcommand{\ee}{\end{equation}}
\newcommand{\beq}{\begin{eqnarray}}
\newcommand{\eeq}{\end{eqnarray}}
\newcommand{\bes}{\begin{eqnarray}}
\newcommand{\ees}{\end{eqnarray}}
\newcommand{\bea}{\begin{eqnarray}}
\newcommand{\eea}{\end{eqnarray}}
\newcommand{\nn}{\nonumber}

\newcommand{\Ref}[1]{(\ref{#1})}

\newcommand{\mat} [2] {\left ( \begin{array}{#1}#2\end{array} \right ) }

\newcommand{\su}{{\mathfrak{su}}}
\newcommand{\gl}{{\mathfrak{gl}}}
\newcommand{\so}{{\mathfrak{so}}}

\renewcommand{\u}{{\mathfrak{u}}}
\renewcommand{\sl}{{\mathfrak{sl}}}
\newcommand{\la}{\langle}
\newcommand{\ra}{\rangle}

\newcommand{\bra}[1]{\la {#1}|}
\newcommand{\ket}[1]{|{#1}\ra}

\newcommand{\w}{\wedge}

\newcommand{\tr}{\mathrm{Tr}}
\newcommand{\f}{\frac}
\newcommand{\tl}{\widetilde}

\newcommand{\eps}{\epsilon}
\newcommand{\g}{\gamma}

\newcommand{\bz}{\bar{z}}

\newcommand{\bF}{\bar{F}}
\newcommand{\bG}{\bar{G}}
\newcommand{\bU}{\bar{U}}
\newcommand{\vJ}{\vec{J}}
\newcommand{\vK}{\vec{K}}
\newcommand{\vL}{\vec{L}}

\newcommand{\va}{\vec{a}}
\newcommand{\vb}{\vec{b}}
\newcommand{\vc}{\vec{c}}
\newcommand{\vd}{\vec{d}}
\newcommand{\ve}{\vec{e}}
\newcommand{\vf}{\vec{f}}
\newcommand{\vv}{\vec{v}}
\newcommand{\vsigma}{\vec{\sigma}}
\newcommand{\id}{\mathbb{I}}

\def\bu{\bar{u}}
\def\bt{\bar{t}}
\def\bbeta{\bar{\beta}}
\def\tM{\widetilde{M}}

\def\arr{\rightarrow}
\def\pp{\partial}
\def\tTheta{\widetilde{\Theta}}

\newtheorem{theoreml}{Theorem}[]

\newtheorem{theoremp}{Theorem}[]
\newtheorem{lemma}[theoreml]{Lemma}
\newtheorem{prop}[theoremp]{Proposition}


\begin{document}

\title{Holomorphic Lorentzian Simplicity Constraints}

\author{{\bf Ma\"it\'e Dupuis}}
\affiliation{Laboratoire de Physique, ENS Lyon, CNRS-UMR 5672, 46 All\'ee d'Italie, Lyon 69007, France}
\affiliation{School of Physics, The University of Sydney,  New South Wales 2006, Australia}
\author{{\bf Laurent Freidel}}
\affiliation{Perimeter Institute, 31 Caroline St N, Waterloo ON, Canada N2L 2Y5}
\author{{\bf Etera R. Livine}}
\affiliation{Laboratoire de Physique, ENS Lyon, CNRS-UMR 5672, 46 All\'ee d'Italie, Lyon 69007, France}
\affiliation{Perimeter Institute, 31 Caroline St N, Waterloo ON, Canada N2L 2Y5}
\author{{\bf Simone Speziale}}
\affiliation{Centre de Physique Th\'eorique, CNRS-UMR 7332, Aix-Marseille Univ, Luminy Case 907, 13288 Marseille, France}

\date{\today}

\begin{abstract}
We develop a Hamiltonian representation of the $\sl(2,\C)$ algebra on a phase space consisting of $N$ copies of twistors, or bi-spinors. We identify a complete set of global invariants, and show that they generate a \emph{closed} algebra including $\gl(N,\C)$ as a subalgebra. Then, we define the linear and quadratic simplicity constraints which reduce the spinor variables to (framed) 3d spacelike polyhedra embedded in Minkowski spacetime.
Finally, we introduce a new version of the simplicity constraints which (i) are holomorphic
and (ii) Poisson-commute with each other, and show their equivalence to the linear and quadratic constraints.
\end{abstract}

\maketitle

\tableofcontents
\newpage


\section{Introduction}
The developments in spin foam models of the last few years \cite{LS,LS2,FK,EPR,EPRL,newlook} are based on a new way to implement the simplicity constraints which reduce BF theory to general relativity \cite{BFsimple,BCE,BCL}. It is now commonly accepted that they are to be imposed weakly, that is as conditions on the expectation values of operators, analogously to the Gupta--Bleuler prescription for the gauge fixing condition in quantum electrodynamics.
The results of this procedure are very promising: the new models have a boundary Hilbert space consistent with the one of loop quantum gravity, thus providing a well-defined dynamics for spin network states. Furthermore, such dynamics reduces to Regge calculus in the (large spin) semiclassical limit on a fixed triangulation \cite{Barrett,CF}, thus suggesting the correct recovery of general relativity at large scales.

These results put the general structure of the new models on a solid basis, and allow us to discuss
the details of the weak imposition, where some open questions remain. One concerns the fact that although the algebra of constraints is second class, it has a non-trivial center, the ``diagonal simplicity constraints'', given by a constraint on the Casimir operators of the Lorentz group. Unlike the rest of the constraints, the center is imposed strongly in the literature.\footnote{We also remind the reader that in the previous Barrett-Crane model \cite{BCE,BCL},
all constraints were imposed strongly, leading to inconsistencies.}
This somewhat uneven procedure for the implementation is a motivation for our present work, where we consider the possibility of treating all constraints on the same footing. This has been advocated in \cite{Alexandrov} and \cite{simplicity1,simplicity2}. Now, it is known that a naive weak implementation of all the constraints leads to a larger boundary Hilbert space with an extra, unphysical quantum number \cite{Ding}.
Here on the other hand, we are interested in the specific way of realizing the weak implementation of all the constraint through coherent states. These are peaked on the classical configurations but with a non-trivial and fixed spread, which should lead to no extra quantum numbers. The required coherent states are such that they satisfy (suitably chosen) half of the constraints \emph{strongly}. This half is selected as in the original Gupta--Bleuler formalism: there, the weak imposition is obtained starting from a splitting of the second class constraints into holomorphic and anti-holomorphic parts. These two parts are separately of first class. Then, the holomorphic part alone can be imposed strongly and this automatically induces a weak imposition of the whole set of constraints.

To realize this program, the spinorial and twisted geometry approach we have introduced in a series of papers \cite{twisted1,twisted2,un0,un1,un2,un3,spinor1,spinor2}
provides valuable tools, because it brings a natural complex structure into the spin foam formalism, which can be used to obtain the holomorphic/anti-holomorphic splitting of the constraints, to then impose the holomorphic part strongly.
This idea has been introduced by two of us in \cite{simplicity1,simplicity2} for the case of Euclidean signature. Here and in the follow-up work \cite{SL2Cquant}, we extend it to the Lorentzian case.
In the present paper, we develop the classical tools needed to the splitting.
We introduce a phase space carrying a complex structure and a Hamiltonian action of ($N$ copies of) $\SL(2,\C)$.
This space consists of $N$ twistors, or pairs of spinors, with canonical Poisson brackets.
We then study the space of global $\SL(2,\C)$ invariants. We identify a complete set, given by scalar products among the spinors, which generate a \emph{closed} Poisson algebra, with $\gl(N,\C)$ as a subalgebra.
Finally, we use the invariants to introduce a new version of the simplicity constraints. These are quadratic in the spinors, and satisfy the following key properties: $(i)$ they are holomorphic with respect to the complex structure, $(ii)$ they Poisson commute with each other, and $(iii)$ they are equivalent to the standard quadratic and linear simplicity constraints.

In the follow-up work \cite{SL2Cquant}, we will study the quantization of the new holomorphic constraints and its implications for loop quantum gravity and spin foam models.
The resulting spin foam amplitudes will be
slightly different from the EPRL-FK models \cite{LS,LS2,FK,EPR,EPRL}, due to the difference in dealing with the
diagonal simplicity constraints, but we nevertheless expect that they
will have the same asymptotics in the large spin limit.
That is to say, this type of details on the imposition of the constraints should not affect the semiclassical limit, but becomes relevant when looking at the quantum corrections.
Our phase space admits a nice interpretation in terms of twistors, which will play a role in the quantization \cite{SL2Cquant}.
This line of thoughts pushes forward the relation between loop gravity and twistors suggested in \cite{twisted2}.
More details on the classical phase space will appear in \cite{Johannes}. See also a work by Wieland coming out in parallel with this one \cite{Wolfgang}, where some aspects of our construction appear as well.

\section{Spinor Phase Space and $\U(N)$-Formalism for $\SU(2)$-Intertwiners}

\subsection{Spinor Phase Space and $\su(2)$-Algebra}
We begin with a review of some material appeared in \cite{twisted2}. Consider the classical vector space of spinors, $\ket{z}\in\C^2$, for which we use ket and bra notations,\footnote{Although everything we do is classical, Dirac's notation is convenient to avoid clogging the formulas with indices.}
$$
|z\ra=\mat{c}{z^0\\z^1}, \qquad \la z|=\mat{cc}{\bar{z}^0 &\bar{z}^1}.
$$
We also define the dual spinor using the complex conjugation map\footnote{The unitary operator $\eps$ satisfies $\eps^2=-\id$, $\eps^\dag=\eps^{-1}=-\eps$ and allows to switch the Pauli matrices to their complex conjugate,
$
\eps\sigma_a\eps\,=\, \overline{\sigma_a}.
$
This allows to check simply the following equalities, which are useful in manipulating spinors:
$$
[z|w]=\la w | z\ra,\qquad [z|w\ra = - [w|z\ra,
\qquad
[z|\vsigma|w]=-\la w |\vsigma| z\ra,\qquad [z|\vsigma|w\ra = [w|\vsigma|z\ra.
$$
}
on $\SU(2)$,
$$
|z] = \mat{c}{-\bar{z}^1\\\bar{z}^0} = \eps\,|\bz\ra,
\qquad
\eps=\mat{cc}{0 & -1 \\ 1&0}.
$$
The spinor defines a real, null vector $J_\mu = (J_0,\vJ\,)$ in Minkowski space via the familiar formula
\be\label{zz}
|z\ra\la z|= J_0 \, \id + \vJ(z)\cdot\vsigma,
\ee
where $\vsigma$ are the Pauli matrices, and
\be\label{defJ}
J_0= \f12\la z|z\ra = |\vJ|, \qquad \vJ(z)=\f12\la z|\vsigma|z\ra=-\f12[ z|\vsigma|z].
\ee
The vector $\vJ$ characterizes the spinor up to a phase, $\phi=\arg z^0 + \arg z^1$.

Next, we endow the space of spinors with canonical Poisson brackets,
\be\label{zPB}
\{z^A,\bz^B\}=-i\delta^{AB},
\ee
where $A,B=0,1$ are spinorial indices. With these brackets, the components of
$\vJ(z)$ generate an SU(2) action on $\ket{z}$,
\be\label{su2PB}
\{J_a(z),J_b(z)\}=\epsilon_{abc}J_c(z),
\ee
\be
\{\vJ(z),|z\ra\}\,=\,i\f\vsigma2\,|z\ra,
\qquad
\{\vJ(z),|z]\}\,=\,i\f\vsigma2\,|z],
\ee
The infinitesimal action can be easily exponentiated,
\be
e^{\{\vec{u}\cdot\vJ(z)\,,\,\cdot\}}\,|z\ra
\,=\,
e^{i\vec{u}\cdot\f{\vsigma}{2}}\,|z\ra,
\ee
and recognized as the action of an $\SU(2)$ group element in the fundamental representation.
The squared norm of the spinor coincides with the Casimir of $\su(2)$,
$\vJ^2(z)=\f14 \la z|z\ra^2$.
Including $J_0$, one recovers a $\u(2)$-algebra on $\C^2$.

The result is just a classical version of Schwinger representation of the angular momentum.
This phase space structure can be quantized, with quantization condition $\bra{z}z\ra\in\N$, giving the irreducible representations of $\SU(2)$ with arbitrary spin $j=\f12\bra{z}z\ra$.
The operators $J_a(z)$ become the $\SU(2)$ generators, and $\vJ^2(z)$ is the $\SU(2)$-invariant quadratic Casimir operator.

\subsection{Classical Geometry of Spinors and Polyhedra}

Let us now consider a collection of $N$ copies of the above phase space, with spinors $z_i$, $i=1 \ldots N$.
We are interested in the space of $\SU(2)$-invariant observables, defined by the condition
\be
\vJ\,\equiv\,\sum_i \vJ(z_i)
= \f12\sum_i  \la z_i | \vsigma | z_i\ra = 0.
\ee
For later purposes, we notice from \Ref{zz} that $\vJ=0$ amounts to requiring that the 2$\times$2 matrix $\sum_i | z_i\ra\la z_i |$ is proportional to the identity,
\be
\sum_i \vJ(z_i)
\,=\,
0
\quad\Leftrightarrow\quad
\sum_i | z_i\ra\la z_i | =\f12\sum_i\la z_i| z_i\ra\,\id\,.
\ee
The geometrical interpretation of $N$ spinors satisfying the closure constraint $\vJ=0$ is that of a ``framed'' polyhedron \cite{un1,un2}. The $N$ spinors $z_i$ define the $N$ vectors $\vJ_i\equiv\vJ(z_i)$, which satisfy the condition $\sum_i \vJ_i=0$. Thanks to Minkowski's theorem, see \cite{polyhedron}, these vectors determine a unique convex polyhedron embedded in $\R^3$ with $N$ faces, such that the $\vJ_i$ are the normal vectors to each face (with their norm giving the area of the corresponding face). Details on the reconstruction of the polyhedron can be found in \cite{polyhedron}. The ``framing" corresponds to the presence on each face of a further variable, the spinor phase $\phi_i$ uncaptured by the $\vJ_i$'s.
The phase defines a local reference frame attached to each face,
and carries the extrinsic curvature of twisted geometries \cite{twisted2}.

As shown in \cite{un1,un2,un3,spinor1}, a complete set of  $\SU(2)$-invariant observables can be defined in terms of scalar products among the spinors,
\be\label{EF}
E_{ij}\equiv \la z_i|z_j\ra,\qquad
F_{ij}\equiv [z_i |z_j\ra.
\ee
The matrix $E$ is Hermitian, $E_{ij}=\overline{E}_{ji}$, while the matrix $F$ is holomorphic in the spinor variables and anti-symmetric, $F_{ij}=-F_{ji}$.
It is straightforward to check that they commute with the $\su(2)$ generators,
\be
\{\vJ,E_{ij}\}=\{\vJ,F_{ij}\}=0.
\ee
Moreover, as noticed in \cite{un0,un1,un2}, their Poisson brackets generate a nice algebra. In particular, the $E_{ij}$ by themselves form a $\u(N)$ algebra,
\be
\{E_{ij},E_{kl}\}
\,=\,
-i\,\left(
\delta_{jk}E_{il}-\delta_{il}E_{kj}
\right),
\ee
with group action on the spinors given by
\be
e^{\{\sum_{j,k} \alpha_{jk} E_{jk},\cdot\}}\,|z_i\ra =
\sum_j(e^{i\alpha})_{ij}\,|z_j\ra.
\ee
Here $U\,\equiv\,e^{i\alpha}\,\in\U(N)$
and $\alpha$ is a Hermitian $N\times N$ matrix labeling the $\u(N)$ Lie algebra element.
Since the observables $E_{ij}$ are $\SU(2)$-invariant, the induced $\U(N)$ action preserves the closure constraints \cite{un1,un2}, \be
\sum_{i,j,k} U_{ij}|z_j\ra\la z_k| {\bU_{ik}}
\,=\,
\sum_j |z_j\ra\la z_j|\,,
\ee
thanks to the unitarity of $U$.

All together, the $E$'s, $F$'s and $\bF$'s form the following closed algebra,
\bes
\{E_{ij},F_{kl}\}&=&
-i\,\left(
\delta_{ik}F_{lj}-\delta_{il}F_{kj}
\right),\\\nn
\{E_{ij},\bF_{kl}\}&=&
-i\,\left(
\delta_{jl}\bF_{ki}-\delta_{jk}\bF_{li}
\right),\\
\{F_{ij},\bF_{kl}\}&=&
-i\,\left(
\delta_{ik} E_{lj} - \delta_{il} E_{kj} -\delta_{jk}E_{li}+\delta_{jl} E_{ki}
\right).\nn
\ees
Observe that the $F$'s are not independent, but satisfy the following Pl\"ucker relations,
\be\label{Plucker}
F_{ij}F_{kl}=F_{ik} F_{jl} -F_{il}F_{jk},
\ee
which express the basic recoupling identity for the tensor product of two copies of the fundamental representation of $\SU(2)$ \cite{spinor1}.

Finally, as used in \cite{un3,spinor1}, these matrices also satisfy quadratic constraints when we impose the closure constraints $\sum_i \vJ(z_i)=0$:
\be\label{EFconstraints}
E^2= \f{\tr E}2\,E,
\qquad
FE= \f{\tr E}2\,F,
\qquad
\bF F= -\f{\tr E}2\,E\,,
\ee
where $\tr E=\sum_iE_{ii}=\sum_i\la z_i|z_i\ra=2\sum_i |\vJ(z_i)|$ gives (twice) the total area of the corresponding  polyhedron.

\medskip

The geometrical interpretation of the $\SU(2)$ observables $E$ and $F$ is given by the expression of the standard scalar product between the normal vectors of each face of the polyhedron. Indeed, the scalar product $\vJ(z_i)\cdot \vJ(z_j)$ are also $\SU(2)$-invariant and be expressed in term of the $E$ and $F$ variables as follows,
\be
\vJ(z_i)\cdot \vJ(z_j)
\,=\,
\f12E_{ij}E_{ji}-\f14E_{ii}E_{jj}
\,=\,
-\f12 F_{ij}\bF_{ij}+\f14E_{ii}E_{jj}.
\ee

\medskip

At the quantum level, the $E_{ij}$ operators will still generate $\U(N)$ transformations acting on SU(2) intertwiner states leaving the total area $\tr E$ invariant. On the other hand, the $F_{ij}$ will act as annihilation operators decreasing the area while the $\bF_{ij}$ will become creation operators increasing the area.
See \cite{un0,un1,un2,un3} for more details.

\subsection{Probing the $\U(N)$ Structure}\label{SecSquash}

An interesting aspect of the $\U(N)$ action is that it can be used to obtain any polyhedron starting from a degenerate, ``squashed'' polyhedron with only two non-trivial faces.
This can be checked explicitly with the action defined above. Consider the following ``squashed'' configuration,
$$
|z_1\ra=\mat{c}{\lambda\\ 0},\qquad
|z_2\ra=|z_1]=\mat{c}{0\\ \lambda},\qquad
|z_{k\ge3}\ra=0,
\qquad
\lambda\in\R^+.
$$
Acting with an arbitrary $U\in\U(N)$, we get a collection of spinors determined by $\lambda$ and the matrix elements of $U$,
\be
|z_i\ra
\,=\,
\lambda\,\mat{c}{U_{i1}\\ U_{i2}}.
\ee
From this expression, one simply realizes that the closure constraint on the $z_i$'s is equivalent to the orthonormality of the two N-vectors $U_{i1}$ and $U_{i2}$:
\be
\sum_i |z_i\ra\la z_i|\propto\id
\quad\Leftrightarrow\quad
\sum_i |U_{i1}|^2-|U_{i2}|^2= \sum_i \bU_{i1}U_{i2}=\sum_i U_{i1}\bU_{i2}=0,
\ee
which is ensured by the unitarity of the matrix $U$. Thus acting with a $\U(N)$-transformation on the squashed configuration allows to get arbitrary collections of spinors satisfying the closure constraints.

Conversely, we can also see this from the point of view of the $E$ matrix. The quadratic constraint in \Ref{EFconstraints} implies that $E$ is diagonalizable, of rank 2 and with equal eigenvalues. Hence, there exists a unitary matrix $U\in\U(N)$ such that
\be
E=U
\,\mat{cc|c}{\lambda & & \\ &\lambda & \\ \hline && 0_{N-2}}\,
U^{-1},
\ee
where $\lambda$ is the unique eigenvalue of degeneracy 2. The diagonal matrix represents the squashed polyhedron with only two non-vanishing spinors.

At the quantum level, this means that the Hilbert space of intertwiners (at fixed total area) is  an irreducible representation of $\U(N)$ with highest weight given by a bivalent intertwiner \cite{un2}.

\subsection{Generalizing to $\SL(2,\C)$: the Basic Idea}

The purpose of this paper is to extend the spinor phase space and classical Schwinger representation from $\SU(2)$ to $\SL(2,\C)$.
As a first step, one easily notices that the single spinor phase space considered thus far also carries a representation of $\SL(2,\C)$. Indeed, introducing next to \Ref{defJ} the following boost generators,
\be\label{defK1}
\vK(z)=\f12 {\rm Re} [z|\vsigma|z\ra,
\ee
we obtain %
\be
\{J_a(z),K_b(z)\}=\{K_a(z),J_b(z)\}=\epsilon_{abc}K_c(z),\qquad
\{K_a(z),K_b(z)\}=-\epsilon_{abc}J_c(z),
\ee
which together with \Ref{su2PB} form an $\sl(2,\C)$ Lie algebra.\footnote{
Alternatively, one can take as boost generators the imaginary part of $[z|\vsigma|z\ra$,
that is
$
{\vec M}(z)=i/4\left(\la z|\vsigma|z]-[z|\vsigma|z\ra\right).
$
The generators $\vJ(z)$, $\vK(z)$, $\vec M(z)$ and $D\equiv\f12\la z | z\ra$ form a $\so(3,2)$ algebra. The interested reader will find the details in Appendix \ref{AppA}.
}
The rotations act as before, whereas the boosts mix holomorphic and anti-holomorphic components,\footnote{To see the mixing more precisely, we can look at a finite transformation on the bi-spinor
$Z\,\equiv\,(|z\ra,|z])$,
$$
e^{\{\vec{u}\cdot\vJ(z)\,,\,\cdot\}}\,\vartriangleright\,Z
\,=\,
\mat{cc}{e^{i\vec{u}\cdot\f{\vsigma}{2}}& 0 \\0&e^{i\vec{u}\cdot\f{\vsigma}{2}}}\,Z\,,
\qquad
e^{\{\vec{u}\cdot\vK(z)\,,\,\cdot\}}\,\vartriangleright\,Z
\,=\,
\mat{cc}{\cosh\f{\vec{u}\cdot\vsigma}{2}& i\sinh\f{\vec{u}\cdot\vsigma}{2} \\
-i\sinh\f{\vec{u}\cdot\vsigma}{2}&\cosh\f{\vec{u}\cdot\vsigma}{2}}\,Z.
$$
From this expression, it is clear that $\vK(z)$ generates a Bogoliubov transformation.
}
\be
\{\vK(z),|z\ra\}\,=\,i\f\vsigma2\,|z],
\qquad
\{\vK(z),|z]\}\,=\,-i\f\vsigma2\,|z\ra.
\ee

However, notice that both $\SL(2,\C)$-invariants vanish,
\be\label{C=0}
C_1 \equiv \vJ^2(z)-\vK^2(z) =0,\qquad
C_2 \equiv 2\vJ(z)\cdot\vK(z)=0.
\ee
This means that if we quantize this phase space with its algebra (turning as usual the classical quantities $\vJ(z),\vK(z)$ into $\SL(2,\C)$ generators acting on a suitable Hilbert space), we obtain only the (unitary) representation with fixed vanishing values for both Casimir operators.
On the other hand, we are interested in recovering all (unitary) representations of $\SL(2,\C)$ at the quantum level, thus this classical phase space is too limited for our purposes.
We need a classical phase space with both invariants taking arbitrary (real) values.

The simplest way to do so is to take the dual spinor $|z]$ independent from $|z\ra$, that is to work with a pair of spinors $(\ket{z}, |w])$. This is what we do in the next Section.
To motivate this approach, notice that \Ref{C=0} means that the  bivector
$J^{IJ}=(\vK,\vJ\,)\in\R^6\cong \w^2\,\R^{3,1}$
has zero norm. If we take two independent spinors, we have two independent null bivectors.
These can be used as a basis to generate a arbitrary bivectors in $\w^2\,\R^{3,1}$, thus arbitrary representations of $\SL(2,\C)$ (the interested reader will find a proof in Appendix \ref{AppB}).

\section{The Double-Spinor Phase Space for $\SL(2,\C)$}

\subsection{Representing $\sl(2,\C)$ with a Pair of Spinors}
We start with two copies of the previous phase space, a pair of spinors $(\ket{z},\ket{w}) \in \C^2\times \C^2$ with the same brackets for $z$ and $w$,
\be\label{zwPB}
\{z^A,\bz^B\}=-i\delta^{AB}, \qquad \{w^A,\bar w^B\}=-i\delta^{AB}.
\ee
We take $\vJ\equiv\vJ(z)+\vJ(w)$ as rotation generators and $\vK={\rm Re} [w|\vsigma|z\ra$
as  boost generators.\footnote{Again, alternatives are possible. Notice in fact that the bi-spinor space carries a representation of the larger group $O(4,2)$, the conformal group of Minkowski spacetime.}
It is straighforward to verify the correct Poisson brackets,
\beq
\{J_a,J_b\} = \epsilon_{abc}J_c \qquad
\{J_a,K_b\} = \epsilon_{abc}K_c \qquad
\{K_a,K_b\} = -\epsilon_{abc}J_c.
\eeq

A direct computation shows that the real quantities
\be\label{AB}
A\equiv \la z|z\ra-\la w|w\ra,\qquad
B\equiv \la z|w]+[ w|z\ra\,.
\ee
are invariant. The Casimirs evaluate to
\be
4(\vJ^2-\vK^2)=\,A^2-B^2,
\qquad
4\,\vJ\cdot\vK=\,AB.
\ee
As desired, they can take now arbitrary real values.

The disadvantage of this choice is that the Poisson brackets \Ref{zwPB} are not preserved by the boosts, which map $z$ into $w$ and viceversa. This is a consequence of the familiar fact that the Hermitian product over $\C^2$ is not preserved by $\SL(2,\C)$. To obtain invariant Poisson brackets, we introduce the left/right splitting
$\sl(2,\C)\cong \su(2)_L \otimes \su(2)_R$, with generators $\vJ^{L,R}=\vJ\pm i \vK$.
To this purpose, we introduce a left-handed spinor $\ket{u}$ and
a right-handed spinor $\ket{t}$,\footnote{The absence of a normalization factor $1/\sqrt{2}$ in the right-hand side will produce unconventional factors of 2 in the expressions for the canonical Poisson brackets and the SU(2) algebra. However, this choice will be convenient below when computing with the simplicity constraints.}
\be
|t\ra\equiv|z\ra-i|w],
\qquad
|u\ra\equiv|z\ra+i|w].
\ee
The Poisson brackets in these coordinates read
\be
\{t_A,\bu_B\}=\{u_A,\bt_B\}=-2i\delta_{AB},
\ee
with all the other vanishing.
These Poisson brackets are invariant under $\SL(2,\C)$.

In terms of the new spinors, the left-right $\su(2)$-generators $\vJ^{L,R}$ turn out to be
\be
\vJ^L=\f12\la t|\vsigma|u\ra,
\qquad
\vJ^R=\f12\la u|\vsigma|t\ra.
\ee
One can easily check the correct algebra,
\be
\{\vJ^L,\vJ^R\}=0,
\quad
\{J_a^L,J_b^L\}=2\eps_{abc}J_c^L,
\quad
\{J_a^R,J_b^R\}=2\eps_{abc}J_c^R,
\ee
and transformation properties,
\begin{align}
& \{\vJ^L,|u\ra\}\,=\,+i\,\vsigma \,|u\ra,
& \{\vJ^L,|t\ra\}\,=\,0,
\\
& \{\vJ^R,|t\ra\}\,=\,+i\vsigma \,|t\ra,
& \{\vJ^R,|u\ra\}\,=\,0,
\end{align}
For later purposes, we also give the exponentiated action,
\begin{align}
\label{JLaction}
& e^{i\vv\cdot\vJ^L}\vartriangleright |u\ra \,=\, e^{\{\vv\cdot\vJ^L,\cdot\}}\, |u\ra
\,=\, e^{+i\vv\cdot\vsigma}\, |u\ra,
& e^{i\vv\cdot\vJ^L}\vartriangleright |t\ra \,=\,|t\ra\,,
\\ \label{JRaction}
& e^{i\vv\cdot\vJ^R}\vartriangleright |t\ra \,=\, e^{\{\vv\cdot\vJ^R,\cdot\}}\, |t\ra
\,=\, e^{+i\vv\cdot\vsigma}\, |t\ra,
& e^{i\vv\cdot\vJ^R}\vartriangleright |u\ra \,=\,|u\ra\,,
\end{align}
where $\vv\in\C^3$ is an arbitrary complex parameter.\footnote{For completeness, here are the transformations on the bras,
\begin{align}\nn
& \{\vJ^L,\la u|\}\,=\,0, & \{\vJ^L,\la t|\}\,=\,-i\la t|\,\vsigma,
&& \{\vJ^R,\la t|\}\,=\,0, &&&\{\vJ^R,\la u|\}\,=\,-i\la u|\,\vsigma,\nn
\end{align}
and
\beq\nn
e^{i\vv\cdot\vJ^L}\vartriangleright \la u| \,=\, \la u|, \qquad
e^{i\vv\cdot\vJ^L}\vartriangleright \la t|\,=\,\la t|\,e^{-i\vv\cdot\vsigma},
\qquad 
e^{i\vv\cdot\vJ^R}\vartriangleright \la t| \,=\, \la t|, \qquad
e^{i\vv\cdot\vJ^R}\vartriangleright \la u| \,=\, \la u|\,e^{-i\vv\cdot\vsigma}.
\eeq
}
These formulas allow us to compute the action of arbitrary $\SL(2,\C)$ transformations on the spinors $t$ and $u$ and thus also on $z$ and $w$.

The $A$ and $B$ invariants can be packaged in the unique (complex) $\SL(2,\C)$ invariant
\be
\la u|t\ra \,=\, A-iB.
\ee
On the other hand, the scalar products $\la t|t\ra$, $\la u|u\ra$ and $[u|t\ra$ do not commute with $\vJ^L$ and $\vJ^R$ and are thus not $\SL(2,\C)$ invariant.
We remark also that the Casimirs of the self-dual and anti-self-dual $\SU(2)$'s are complex:
\be
(\vJ^L)^2=\f14\,\la t|u\ra^2
\,=\,
\f14(A+iB)^2,
\qquad
(\vJ^R)^2=\f14\,\la u|t\ra^2
\,=\,
\f14(A-iB)^2.
\ee

\medskip

Finally, the pair of spinors parametrizing the phase space can be viewed as a bi-spinor
\be
Z = (\ket{u}, \ket{t}).
\ee
This bi-spinor is the classical equivalent of a Dirac spinor, in the sense that
it is the direct sum of a left and a right-handed spinor, like in quantum theory.\footnote{
With this identification, the two invariants $A$ and $B$ become the familiar scalars $\bar \psi \psi$ and $\bar \psi \g^5 \psi$.}
The bi-spinor $Z$ also defines a twistor \cite{Penrose}. Following \cite{twisted2}, we are thus led to a description of the phase space in terms of twistors.
The relation with twistors will be developed further in the follow-up works \cite{Johannes} and \cite{SL2Cquant}.

\subsection{$\SL(2,\C)$-Invariants and the $\gl(N,\C)$-Structure}

We now consider a collection of $N$ pairs of spinors $(z_i,w_i)$, or equivalently $(t_i,u_i)$, with the index $i$ running from 1 to $N$, and the corresponding $N$ copies of the $\sl(2,\C)$ algebra with generators $\vJ_i,\vK_i$.
We are interested in the space of observables invariant under  global $\SL(2,\C)$ transformations, generated by $\vJ\equiv\sum_i\vJ_i$ and $\vK\equiv\sum_i\vK_i$.
A typical set of observables is provided by the scalar products,
\be\label{JJ}
\vJ^{L}_i\cdot\vJ^{L}_j, \qquad \vJ^{R}_i\cdot\vJ^{R}_j, \qquad \forall i,j.
\ee
It is not difficult to find also invariants in the form of scalar products among the spinors, building on the previous results \Ref{EF} and \Ref{AB}. These invariants consist of the generalization of the previous quantities $A$ and $B$, which are now $N\times N$ matrices,
\begin{subequations}\label{ABFG}\be
A_{ij}\equiv \la z_i|z_j\ra-\la w_j|w_i\ra,\qquad
B_{ij}\equiv \la z_i|w_j]+[ w_i|z_j\ra,
\ee
plus the new invariants
\be
F_{ij}\equiv [z_i|z_j\ra + \la w_i|w_j],\qquad
G_{ij}\equiv \la w_j|z_i\ra - \la w_i|z_j\ra.
\ee\end{subequations}
It is straightforward to check that all of the above expressions have vanishing Poisson brackets with both $\vJ$ and $\vK$.
The matrices $A_{ij}$ and $B_{ij}$ are Hermitian, while $F_{ij}$ and $G_{ij}$ are antisymmetric, purely holomorphic in $z$ and anti-holomorphic in $w$.

Unlike \Ref{JJ}, the invariants \Ref{ABFG} define a closed Poisson algebra. The $A$ and $B$ observables give a closed algebra by themselves,
\begin{subequations}\label{GLNC}\bes
\{A_{ij},A_{kl}\}
&=& -i(\delta_{jk}A_{il}-\delta_{il}A_{kj}), \\
\{A_{ij},B_{kl}\}
&=& -i(\delta_{jk}B_{il}-\delta_{il}B_{kj}), \nn\\
\{B_{ij},B_{kl}\}
&=& +i(\delta_{jk}A_{il}-\delta_{il}A_{kj}). \nn
\ees
This can be recognized to be a $\gl(N,\C)$ algebra, which is the complexification of the $\u(N)$ algebra found in the case of $\SU(2)$  invariants. The $A_{ij}$ operators generate simultaneous $\U(N)$ transformations on both spinors, $z\arr U\,z$ and $w\arr \bU\,w$, while the $B_{ij}$ operators mix the two sets.
The complete algebra of all observables $A,B,F,G$ has the following closed form,
\bes
\{F_{ij},\bF_{kl}\}
&=&
-i\,\left(
\delta_{ik} A_{lj} - \delta_{il} A_{kj} -\delta_{jk}A_{li}+\delta_{jl} A_{ki}
\right)
\,, \\
\{G_{ij},\bG_{kl}\}
&=&
+i\,\left(
\delta_{ik} A_{lj} - \delta_{il} A_{kj} -\delta_{jk}A_{li}+\delta_{jl} A_{ki}
\right)
\,, \nn\\
\{F_{ij},\bG_{kl}\}
&=&
-i\,\left(
\delta_{ik} B_{lj} - \delta_{il} B_{kj} -\delta_{jk}B_{li}+\delta_{jl} B_{ki}
\right)
\,, \nn\\
\{A_{ij},F_{kl}\}
&=&
-i\,\left(
\delta_{ik} F_{lj} - \delta_{il} F_{kj}
\right)
\,, \nn\\
\{A_{ij},G_{kl}\}
&=&
-i\,\left(
\delta_{ik} G_{lj} - \delta_{il} G_{kj}
\right)
\,, \nn\\
\{B_{ij},F_{kl}\}
&=&
-i\,\left(
\delta_{ik} G_{lj} - \delta_{il} G_{kj}
\right)
\,, \nn\\
\{B_{ij},G_{kl}\}
&=&
+i\,\left(
\delta_{ik} F_{lj} - \delta_{il} F_{kj}
\right)
\,, \nn
\ees
and in particular the remaining brackets vanish,
\be\label{FG0}
\{F,F\}=\{G,G\}=\{F,G\}=0.
\ee\end{subequations}
The fact that $F$ and $G$ Poisson commute will play an important role in the following.

For later convenience, we express the $\SL(2,\C)$ observables \Ref{JJ} and \Ref{ABFG} in terms of the $(t,u)$ spinors. We find:
\be
\la u_i|t_j \ra\,=\,A_{ij}-iB_{ij},
\qquad
\la t_i|u_j \ra\,=\,A_{ij}+iB_{ij},
\ee
\be
[t_i|t_j \ra\,=\,F_{ij}-iG_{ij}\,,
\qquad
[u_i|u_j \ra\,=\,F_{ij}+iG_{ij}\,,
\ee
as well as
\bes
\vJ^{L}_i\cdot\vJ^{L}_j &=&
\f14\la t_i|u_i\ra \la t_j|u_j\ra - \f12 \la t_j|t_i][ u_i|u_j\ra,
\label{JJtuL} \\
\vJ^{R}_i\cdot\vJ^{R}_j &=&
\f14\la u_i|t_i\ra \la u_j|t_j\ra - \f12 \la u_j|u_i][ t_i|t_j\ra.\label{JJtuR}
\ees

\subsection{The $\GL(N,\C)$ Action and Squashed Spinor Configurations}

Let us pause for a moment, and give some further details on the observables $A_{ij},\,B_{ij}$ and the $\GL(N,\C)$ transformations that they generate. This will be useful when treating the quantum theory in \cite{SL2Cquant}.\footnote{In the SU(2) case recalled earlier, there is a $\U(N)$ action generated by the observables $E_{ij}$, and this turns out to be very useful to understand the structure of the space of quantum intertwiners, and essential to define proper coherent intertwiner states \cite{un1,un2}.}
First, considering the Lie algebra generated by $A_{ij}$ and $B_{ij}$, we have the two obvious invariants, which commute with all other $\gl(N,\C)$ operators:
\be
\cA\equiv\tr A,\qquad
\cB\equiv\tr B\,.
\ee
Anticipating on the geometric interpretation to be discussed below in Section \ref{SecPoly}, these two invariants play the role of the total area of the polyhedron. They do not change under $\GL(N,\C)$ transformations acting on the spinors.

Computing the Poisson bracket $\{\sum_{j,k}\alpha_{jk}A_{jk}+\beta_{jk}B_{jk},\cdot \}$ with the spinors $z_i$ and $w_i$ for arbitrary Hermitian parameters $\alpha$ and $\beta$, we get the infinitesimal $\GL(N,\C)$ action. We can then exponentiate this action, to find
\bes
e^{\{\sum_{j,k}\alpha_{jk}A_{jk}+\beta_{jk}B_{jk},\cdot \}}\,|t_i\ra&=&\sum_jM_{ij}\,|t_j\ra
\qquad\textrm{with}
\quad
M=e^{i(\alpha+i\beta)}, \nn\\
e^{\{\sum_{j,k}\alpha_{jk}A_{jk}+\beta_{jk}B_{jk},\cdot \}}\,|u_i\ra&=&\sum_j\tM_{ij}\,|t_j\ra
\qquad\textrm{with}
\quad
\tM=e^{i(\alpha-i\beta)}=\,(M^\dagger)^{-1}. \nn
\ees
When the parameter $\beta$ vanishes, then $\tM=(M^\dagger)^{-1}=M$ and the $\GL(N,\C)$ transformations reduces to a unitary transformation in $\U(N)$.

We can go further in investigating the $\gl(N,\C)$ structure of the  space of the $\SL(2,\C)$-invariant observables.
The $\sl(2,\C)$ closure constraints, $\vJ=\vK=0$, can be written in terms of the spinors $t$ and $u$ as matrix equations,
\begin{subequations}\label{clostu}
\bes
\sum_i|t_i\ra\la u_i|&=&\f12\sum_i\la u_i|t_i\ra \id\,=\,\f12(\cA-i\cB)\,\id, \\
\sum_i|u_i\ra\la t_i|&=&\f12\sum_i\la t_i|u_i\ra \id\,=\,\f12(\cA+i\cB)\,\id.
\ees\end{subequations}
Then using the expression of the matrices $A$ and $B$ as scalar products between the spinors $t$ and $u$, we get quadratic constraints on the $\gl(N,\C)$ generators,
\be
(A-iB)^2\,=\,\f12(\cA-i\cB)\,(A-iB),\qquad
(A+iB)^2\,=\,\f12(\cA+i\cB)\,(A+iB).
\ee
These matrix equations have the form of a polynomial equation $C^2-\f{\tr C}2\,C=0$ for a complex matrix $C$. Excluding the cases when $\tr C=0$, which correspond to degenerate configurations, such an equation implies that $C$ is diagonalizable, with two possible eigenvalues, $0$ and $\f{\tr C}2$, the latter with degeneracy equal to 2. Applying this property to the matrices $A\pm iB$, it implies that there exists an invertible matrix $M\in\GL(N,\C)$ such that
\be
(A-iB)=\,\lambda\,M
\,\mat{cc|c}{1 & & \\ &1& \\ \hline && 0_{N-2}}\,
M^{-1},
\qquad
(A+iB)=\,\overline{\lambda}\,(M^\dagger)^{-1}
\,\mat{cc|c}{1 & & \\ &1& \\ \hline && 0_{N-2}}\,
M^\dagger\,.
\ee
This means that we can always perform a $\GL(N,\C)$ transformation that maps the two matrices $(A-iB)_{ij}=\la u_i|t_j\ra$ and $(A+iB)_{ij}=\la t_i|u_j\ra$ on such rank-two diagonal matrices.
As earlier in Section \ref{SecSquash}, these rank-two diagonal matrices represent squashed configurations of the spinors where all but two pairs of spinors vanish, that is
\be
(|z_1\ra,|w_1\ra),\qquad
(|z_2\ra=|z_1],|w_2\ra=|w_1]),\qquad
|z_{k\ge 3}\ra=|w_{k\ge 3}\ra=0.
\ee

This property of the $\GL(N,\C)$ action on sets of spinors satisfying the $\sl(2,\C)$ closure constraints means that  the space of $\SL(2,\C)$ intertwiners (for fixed values of $\cA$ and $\cB$) should carry at the quantum level a (irreducible) representation of $\GL(N,\C)$ with highest weight defined by a bivalent intertwiner and that we should be able to define coherent intertwiner states peaked on classical sets of spinors by the $\GL(N,\C)$ action on such bivalent intertwiners, just like when using the $\U(N)$ action on $\SU(2)$ intertwiners \cite{un2,simplicity1}. This will be studied in \cite{SL2Cquant}.

\section{Holomorphic Simplicity Constraints}
\label{secsimpl}

We are finally ready to discuss the main result of this paper.
We have introduced a phase space with an $\SL(2,\C)$ action on it, and a complete set of $\SL(2,\C)$-invariant quantities.
On the space of $\SL(2,\C)$ invariants, we consider the following quadratic simplicity constraints,
\be \label{simplquadra}
\vJ^{R}_i\cdot\vJ^{R}_j=e^{2i\theta} \,\vJ^{L}_i\cdot\vJ^{L}_j,\qquad\forall i,j,
\ee
where
\be
e^{i\theta(\g)}=\f{1+i\g}{1-i\g}, \qquad \g=\tan\f\theta2
\ee
is the Immirzi parameter.
In the following, we will simply denote by $\theta$ this specific function of $\g$.
Note that $\vJ^{R}_i\cdot\vJ^{R}_j$ is the complex conjugate of $\vJ^{L}_i\cdot\vJ^{L}_j$, so that they
can only differ by a phase. From this point of view, fixing the Immirzi parameter amounts to fixing that phase uniquely for all couples of edges $(i,j)$.
To be precise, these are the primary simplicity constraints. We do not discuss the secondary constraints in this paper.
For an introduction to simplicity constraints and their role in spin foam models, see \cite{tesi}. What is relevant for us, is that the constraints are second class, and furthermore their algebra does not close, a situation which requires care to define the quantum theory.
They have a non-trivial center, given by the diagonal constraints ($i=j$),
\be
(\vJ^{R}_i)^2 = e^{2i\theta}\,(\vJ^{L}_i)^2,
\ee
or equivalently by
\be
M_i = J^2_i-K^2_i +2\cot\theta \, K_i\cdot J_i = 0.
\ee
The off-diagonal ones ($i\neq j$) do not commute on the constraints surface, thus establishing their second class nature, as can be easily seen from that fact that in both right and left sectors,
\be
[\vJ_i\cdot \vJ_j, \vJ_i\cdot \vJ_k] = i \vJ_i \cdot \vJ_j \w \vJ_k.
\ee

The constraints exist also in linear form,\footnote{The quadratic constraints have two sets of solutions, however these correspond simply to the flip $\g\mapsto-1/\g$ in the linear constraints.} which states that
the special combination\footnote{Our covariant conventions are as follows:
$$
K_a\equiv J_{0a}, \qquad J_a\equiv \eps_{abc} J_{bc}, \qquad \star=\f12\eps^{IJ}_{KL}, \qquad \eps^{0123}=1.
$$
}
\linebreak $(J_i-\g\star J_i)_{IJ}$ has a unique (timelike) normal for all $i$,
\be\label{simplLin}
{\cal N}^I (J_i-\g\star J_i)_{IJ}=0.
\ee
This constraint is more familiar in the literature in the time gauge ${\cal N}^I=(1,0,0,0)$,
where it takes the form
\be\label{defC}
\vec C_i \equiv \vK_i +\gamma \vJ_i = 0,
\ee
and for each face $i$,
\be
\{ C^a, C^b \} = \eps^{abc} [2 \g C^c - (1+\g^2) J^c],
\ee
which shows explicitly that they are first class only in the self-dual theory for which $1+\g^2=0$.
The expression \Ref{defC} is not gauge invariant, thus can not be imposed as such on the phase space of
SL$(2,\C)$ invariant configurations (and equivalently on intertwiners at the quantum level).
In the following, we will concentrate on the covariant form of the constraints.

Because they are second class, the simplicity constraints should not be imposed as strong operator equations at the quantum level.
In the current spin foam models, the quantization is done imposing them weakly, as expectation values, with the exception of the center, which is treated differently and imposed as a strong operator equation.
In \cite{simplicity1,simplicity2}, it was proposed to treat all constraints on the same footing, and to realize the weak imposition as a strong imposition of the holomorphic part of the constraints. This is possible thanks to the fact that at the quantum level, the holomorphic simplicity constraints involve then only annihilation operators, and thus commute with each other.
It amounts to a weak imposition by coherent states. This is similar to what happens in the EPRL-FK models, except one will use the complex structure introduced by the spinors, and also the diagonal constraints will be imposed weakly.

This is precisely what happens in the Gupta-Bleuler prescription for the gauge-fixing condition in quantum electrodynamics, and crucially relies on a complex structure with respect to which perform the splitting.
The space that we have introduced has a natural complex structure, and this can be now exploited to achieve the splitting.
Notice in fact that the constraints \Ref{simplquadra} and \Ref{simplLin} are not holomorphic in the spinorial variables. For instance, \Ref{defC} reads
\be
(1+i\g)\, \bra{t}\vec\sigma \ket{u} = (1-i\g)\, \bra{u} \vec\sigma \ket{t},
\ee
which involves both the fundamental spinor variables and their complex conjugates.

Therefore, we want to introduce new simplicity constraints such that
\begin{itemize}
\item they are holomorphic in the $u_i,t_i$ variables;
\item they Poisson commute with each other;
\item they  imply the usual quadratic simplicity constraints (the diagonal ones as well as the cross diagonal ones, which in turn imply the linear constraints).
\end{itemize}

\subsection{Holomorphic  Simplicity Constraints}
\label{simpleL}
Following the previous work in the Euclidean case \cite{simplicity1,simplicity2} by two of us, we introduce holomorphic constraints linear in the invariants $F_{ij},G_{ij}$, or equivalently $[t_i|t_j \ra,[u_i|u_j \ra$:
\be\label{simple_def}
\cC_{ij} \equiv [t_i|t_j \ra - \alpha^2 [u_i|u_j \ra = 0, \qquad\alpha\in\C\setminus \{0\}.
\ee
This is the key equation of the paper.
$C_{ij}$ is an antisymmetric matrix, quadratic in the spinorial variables like the linear constraints.
As desired, the new constraints are fully holomorphic in $u$ and $t$,\footnote{In terms of the initial spinors \Ref{zwPB}, the constraints are holomorphic in $z$ and anti-holomorphic in $w$.}
and \emph{they commute with each other},
\be
\{\cC_{ij}, \cC_{kl}\} = 0,
\ee
as a direct consequence of \Ref{FG0}.
We now prove that they imply the quadratic \Ref{simplquadra} and linear \Ref{defC} constraints.

\begin{prop}
If the holomorphic simplicity constraints
$\cC_{ij}=0$ are satisfied for all pairs $i,j$, and the $\sl(2,\C)$-closure constraints hold, $\sum_i \vJ_i=\sum_i\vK_i=0$, then the quadratic simplicity constraints are satisfied,
\be
\vJ^{R}_i\cdot\vJ^{R}_j=e^{2i\theta}\,\vJ^{L}_i\cdot\vJ^{L}_j,\qquad\forall i,j,
\ee
with
\be
\alpha=r e^{i\f\theta2}.
\ee
\end{prop}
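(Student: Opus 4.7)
My plan is to substitute the holomorphic constraint $[t_i|t_j\ra = \alpha^2 [u_i|u_j\ra$, together with its complex conjugate $\la t_j|t_i] = \bar\alpha^2\la u_j|u_i]$, directly into \Ref{JJtuL}-\Ref{JJtuR}. Using $\la u_j|u_i][u_i|u_j\ra = |[u_i|u_j\ra|^2$, both products collapse to
\begin{align*}
\vJ^L_i\cdot\vJ^L_j &= \tfrac14 \la t_i|u_i\ra\la t_j|u_j\ra - \tfrac{\bar\alpha^2}{2}\,|[u_i|u_j\ra|^2, \\
\vJ^R_i\cdot\vJ^R_j &= \tfrac14 \la u_i|t_i\ra\la u_j|t_j\ra - \tfrac{\alpha^2}{2}\,|[u_i|u_j\ra|^2,
\end{align*}
so that on the constraint surface $\vJ^R_i\cdot\vJ^R_j = \overline{\vJ^L_i\cdot\vJ^L_j}$. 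The desired relation $\vJ^R = e^{2i\theta}\vJ^L$ is therefore equivalent to showing that $\vJ^L_i\cdot\vJ^L_j$ is a real multiple of $e^{-i\theta}$. Since $\bar\alpha^2 = r^2 e^{-i\theta}$ already supplies the correct phase for the ``off-diagonal'' term, the entire proof reduces to fixing the phase of $\la t_i|u_i\ra\la t_j|u_j\ra$ to $-\theta$ modulo $\pi$.

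The next step is to unpack a hidden $\GL(2,\C)$ map from the holomorphic constraint. Since $[t_i|t_j\ra = \det(|t_i\ra,|t_j\ra)$ is the Pl\"ucker bracket on $\C^2$, the identity $\det(|t_i\ra,|t_j\ra) = \alpha^2\det(|u_i\ra,|u_j\ra)$ for all $i,j$ forces, in a generic configuration in which some pair $|t_1\ra,|t_2\ra$ is linearly independent, the existence of a unique matrix $S \in \GL(2,\C)$ with $|u_i\ra = S|t_i\ra$ for every $i$ and $\det S = 1/\alpha^2 = e^{-i\theta}/r^2$. This is obtained by expanding the remaining $|u_k\ra$ and $|t_k\ra$ in the basis $(|t_1\ra,|t_2\ra)$ and matching Pl\"ucker brackets pairwise.

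Finally I would plug this representation into the closure constraint \Ref{clostu}. Writing $\la u_i| = \la t_i|S^\dagger$, the closure becomes $\rho\, S^\dagger = \tfrac12(\cA-i\cB)\,\id$, where $\rho \equiv \sum_i|t_i\ra\la t_i|$ is Hermitian and generically positive definite. Inversion yields $S = \tfrac12(\cA+i\cB)\rho^{-1}$, so that $\la t_i|u_i\ra = \tfrac12(\cA+i\cB)\la t_i|\rho^{-1}|t_i\ra$ is a positive real multiple of $\cA+i\cB$, with the same argument for every $i$. Matching $\det S = e^{-i\theta}/r^2$ against $\det S = (\cA+i\cB)^2/(4\det\rho)$ and using $\det\rho>0$ forces $\arg[(\cA+i\cB)^2]=-\theta$, so $\la t_i|u_i\ra\la t_j|u_j\ra$ has phase $-\theta$, completing the argument.

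I expect the main technical obstacle to be the handling of degenerate spinor configurations in which the reconstruction of $S$ fails --- for instance when all $|t_i\ra$ are collinear or when $\rho$ drops rank so that $\rho^{-1}$ does not exist. In such cases the chain $\text{constraint} \Rightarrow S \Rightarrow \text{phase}$ has to be replaced either by a direct computation in that degenerate stratum or by a continuity/limit argument from the generic locus.
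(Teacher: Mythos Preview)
Your argument is correct, and it reaches the same conclusion as the paper, but the route is genuinely different. The paper never constructs the matrix $S$ in the proof of this proposition; instead it works purely with scalar identities. From the closure relation \Ref{clostu} it writes
\[
\la u_i|t_i\ra=\f{\sum_j[t_i|t_j\ra\la u_j|u_i]}{\tfrac12\sum_k \la u_k |t_k\ra},
\]
substitutes the holomorphic constraint and its conjugate, and after summing over $i$ obtains $(\sum_j\la u_j|t_j\ra)^2=e^{2i\theta}(\sum_j\la t_j|u_j\ra)^2$, hence $\la u_i|t_i\ra=\eps\,e^{i\theta}\la t_i|u_i\ra$ for a global sign $\eps$. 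That immediately fixes the phase of $\la t_i|u_i\ra\la t_j|u_j\ra$, which is exactly the missing ingredient you isolate.

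What you do instead is anticipate the paper's subsequent Lemma: you extract the $\GL(2,\C)$ map $S$ (the paper's $\alpha\Lambda$ in inverse form) directly from the Pl\"ucker identities, and then use closure to show $S=\tfrac12(\cA+i\cB)\rho^{-1}$, so that Hermiticity of $\rho$ and the determinant condition pin down the phase. This is more structural: it packages the proposition and the lemma into one step and makes the role of the boost $\Lambda$ transparent from the outset. The paper's approach is slightly more elementary in that it never inverts a matrix, but it still tacitly assumes $\sum_k\la u_k|t_k\ra\neq0$, which is the same nondegeneracy you need for $\rho$ to be invertible. Your concern about the degenerate stratum is legitimate; the paper disposes of it in a footnote by observing that if all $[u_i|u_j\ra$ vanish then $\sum_i|t_i\ra\la u_i|$ has rank one and cannot satisfy closure, so the case does not arise on the constraint surface. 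The same remark covers your $\rho$-rank-drop worry.
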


\begin{proof}
Let us start with $[t_i|t_j \ra= r^2e^{i\theta} [u_i|u_j \ra$ and its complex conjugate $\la t_j|t_i]= r^2e^{-i\theta} \la u_j|u_i]$. This implies that
\be
\la u_j|u_i][t_i|t_j\ra=e^{2i\theta}\la t_j|t_i][u_i|u_j\ra.
\label{terme1}
\ee
Then let us use the $\sl(2,\C)$ closure constraints \Ref{clostu} to evaluate the scalar product $\la u_i|t_i\ra$,
$$
\la u_i|t_i\ra=[t_i|u_i]
\,=\,
\f{\sum_j[t_i|t_j\ra\la u_j|u_i]}{\f12\sum_k \la u_k |t_k\ra}.
$$
Using the equality \eqref{terme1} derived above and then again the closure constraints, we get
\be
\la u_i|t_i\ra=e^{2i\theta}
\,\f{\sum_j \la t_j |u_j\ra}{\sum_k \la u_k |t_k\ra}\,
\la t_i|u_i\ra.
\ee
Summing this equation over the index $i$ gives us $({\sum_j \la u_j |t_j\ra})^2=e^{2i\theta}\,(\sum_j \la t_j |u_j\ra)^2$, that is
$$
\sum_j \la u_j |t_j\ra\,=\,\eps\, e^{i\theta}\,\sum_j \la t_j |u_j\ra, \qquad \eps=\pm.
$$
Inserting this in the previous equation gives
\be\label{utpm}
\la u_i|t_i\ra=\,\eps \, e^{i\theta} \la t_i|u_i\ra,\quad\forall i\,,
\ee
where the sign $\eps$ is the same for all $i$'s.
This implies
\be
\la u_i|t_i\ra\la u_j|t_j\ra=
e^{2i\theta} \la t_i|u_i\ra\la t_j|u_j\ra\,,
\label{terme2}
\ee
which does not depend on the sign. Finally, injecting the two equalities \eqref{terme1} and \eqref{terme2} in the expressions \eqref{JJtuL} and \eqref{JJtuR} for the scalar products $\vJ^{R,L}_i\cdot\vJ^{R,L}_j$, we obtain
$$
\vJ^{R}_i\cdot\vJ^{R}_j
\,=\,e^{2i\theta}\,\vJ^{L}_i\cdot\vJ^{L}_j,\quad\forall i,j
$$
as claimed. The modulus $|\alpha|\equiv r$ is irrelevant.

\end{proof}

The sign ambiguity above in \Ref{utpm} corresponds to two sectors of solutions of the quadratic constraints,
related in the linear constraints by the flip $\g\mapsto -1/\g$, or equivalently
$$
\theta\mapsto \theta+\pi, \qquad \alpha \mapsto i \alpha.
$$
The holomorphic constraints \Ref{simple_def} are sensitive to this flip and
distinguishes the two sectors, just as the linear simplicity
constraints. This is similar to what happens in the Euclidean case
\cite{simplicity2}.

The result in the proposition has an important immediate consequence, which we now present and which introduces a version of the constraints linear in the spinors.

\begin{lemma}
\label{cool}

Let us assume that we have two collections of spinors $t_i$ and $u_i$ satisfying $\cC_{ij}=0 \ \forall i,j$. If not all scalar products $[u_i|u_j \ra$ vanish, there exists a (unique) 2$\times$2 matrix $\Lambda\in\SL(2,\C)$ such that
\be\label{tLu}
|t_i\ra = \alpha\,\Lambda\,|u_i\ra,\quad\forall i\,.
\ee

\end{lemma}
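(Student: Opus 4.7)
The plan is to build $\Lambda$ explicitly from a well-chosen pair of spinors and then propagate the equality to all $i$ by using the remaining constraints, exploiting the fact that $[v|w\rangle$ is (up to a sign) the determinant of the $2\times 2$ matrix whose columns are $v$ and $w$, so that $[v|w\rangle \neq 0$ is equivalent to $(v,w)$ being a basis of $\mathbb{C}^2$.

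First I would pick indices $a,b$ such that $[u_a|u_b\rangle \neq 0$, which exist by hypothesis. Then $(u_a,u_b)$ is a basis of $\mathbb{C}^2$, and by the constraint $\cC_{ab}=0$ one has $[t_a|t_b\rangle = \alpha^2[u_a|u_b\rangle \neq 0$, so $(t_a,t_b)$ is also a basis. Define $\Lambda$ as the unique $\GL(2,\C)$ matrix mapping $\alpha u_a \mapsto t_a$ and $\alpha u_b \mapsto t_b$; this exists and is unique by elementary linear algebra. The determinant condition follows immediately: $\det(t_a,t_b) = \alpha^2 \det(\Lambda)\det(u_a,u_b)$, while the constraint $\cC_{ab}=0$ rewritten as $\det(t_a,t_b) = \alpha^2 \det(u_a,u_b)$ forces $\det \Lambda = 1$, so $\Lambda \in \SL(2,\C)$.

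Next I would show $|t_i\rangle = \alpha\Lambda|u_i\rangle$ for arbitrary $i$ by expanding in the two bases. Write $|u_i\rangle = \lambda|u_a\rangle + \mu|u_b\rangle$ and $|t_i\rangle = x|t_a\rangle + y|t_b\rangle$. Contracting with $[u_b|$ and $[u_a|$ via the symplectic pairing gives
\begin{equation}
\lambda = \frac{[u_i|u_b\rangle}{[u_a|u_b\rangle}, \qquad \mu = \frac{[u_a|u_i\rangle}{[u_a|u_b\rangle},
\end{equation}
and analogously for $x,y$ in terms of $[t_i|t_\cdot\rangle$. The constraints $\cC_{ia}=\cC_{ib}=\cC_{ab}=0$ imply that the numerators and denominators for $(x,y)$ are each $\alpha^2$ times those for $(\lambda,\mu)$, so the ratios match: $x=\lambda$ and $y=\mu$. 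Since $\Lambda$ already sends $\alpha u_a \to t_a$ and $\alpha u_b \to t_b$ by construction, linearity gives $\alpha\Lambda|u_i\rangle = x|t_a\rangle + y|t_b\rangle = |t_i\rangle$.

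Uniqueness of $\Lambda$ is immediate: any two elements of $\SL(2,\C)$ that coincide on a basis of $\C^2$ coincide everywhere. I do not anticipate a serious obstacle; the only subtlety is the justification of the determinantal identification $[v|w\rangle = \det(v,w)$ and the corresponding linear-independence criterion, but this is just a restatement of the definitions given at the start of the paper.
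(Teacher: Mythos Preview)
Your argument is correct and follows essentially the same route as the paper: choose a pair $(a,b)$ with $[u_a|u_b\rangle\neq 0$, define $\Lambda$ by its action on this basis, verify $\det\Lambda=1$ from the constraint $\cC_{ab}=0$, and then propagate to all $i$ using $\cC_{ia}=\cC_{ib}=0$. The only cosmetic differences are that the paper first absorbs $\alpha$ into the $u_i$'s and writes down the closed-form expression $\Lambda=\big(|t_b\rangle[u_a|-|t_a\rangle[u_b|\big)/[u_a|u_b\rangle$, then computes $\Lambda|u_k\rangle$ directly from this formula rather than via your basis-coefficient expansion; the content is the same.
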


\begin{proof}
First, we can reabsorb the constant $\alpha\in\C$ in the definition of the spinors $u_i \arr \alpha u_i$. This allows to set $\alpha$ to 1 without loss of generality. Then let us choose $i,j$ such that $[u_i|u_j \ra \ne 0$. This means in particular that $u_j$ and $u_i$ are not colinear. Hence, also $[t_i|t_j \ra \ne 0$ and the spinors $t_j$ and $t_i$ are not colinear. Thus there exists a unique invertible 2$\times$2 matrix $\Lambda$ that sends the spinors $u_i$ and $u_j$ respectively on $t_i$ and $t_j$:
$$
|t_i\ra = \Lambda\,|u_i\ra,\quad
|t_j\ra = \Lambda\,|u_j\ra\,.
$$
Due to the hypothesis that $[t_i|t_j \ra=  [u_i|u_j \ra$, this matrix has necessarily a unit determinant and thus lies in $\SL(2,\C)$. An explicit formula is given by
\be
\Lambda = \f{|t_j\ra [u_i|-|t_i\ra [u_j|}{[u_i|u_j \ra}\,.
\ee
Let us consider an arbitrary index $g$ and compute the image of $|u_k\ra$:
\be
\Lambda\,|u_k\ra
\,=\,
\f{|t_j\ra [u_i|u_k\ra-|t_i\ra [u_j|u_k\ra}{[u_i|u_j \ra}
\,=\,
\f{|t_j\ra [t_i|t_k\ra-|t_i\ra [t_j|t_k\ra}{[t_i|t_j \ra}
\,=\,
|t_k\ra\,,
\ee
which proves the desired result.
\end{proof}

If we take into account also the closure constraints, we can further show that $\Lambda$ is a pure boost (up to a flip, see below), and derive directly the linear simplicity constraints \Ref{simplLin}.
Accordingly, let us assume that the holomorphic simplicity constraints \Ref{simple_def} and the $\sl(2,\C)$ closure constraints \Ref{clostu} are satisfied.
This implies that at least one scalar product $[u_i|u_j \ra$ does not vanish,\footnote{If all scalar products $[u_i|u_j \ra$, and thus $[t_i|t_j \ra$, vanish, all spinors $u_i$ are colinear to each other, and also all spinors $t_i$ are colinear to each other, which implies that the matrix $\sum_i |t_i\ra\la u_i |$ is of rank-one and can not be proportional to the identity, which violates the closure constraints.}
thus we can apply the previous lemma.
Combining \Ref{tLu} with the closure constraint \Ref{clostu}, we get:
\be
\alpha\,\Lambda\,\sum_i |u_i\ra \la u_i|\,=\,\sum_i |t_i\ra \la u_i|
\,=\,\f12\sum_i  \la u_i|t_i\ra\,\id.
\ee
Since the matrix $\sum |u_i\ra \la u_i|$ is Hermitian, we conclude that the matrix $\Lambda$ is either Hermitian or anti-Hermitian, depending on the sign of \Ref{utpm}: $\Lambda^\dagger=\eps\, \Lambda$.
Being in $\SL(2,\C)$, this means that $\Lambda$ is either a pure boost, say $\Lambda=\exp\{\vv\cdot\vsigma\}$ with $\vv\in\R^3$, or a ``flipped boost'', $\tl\Lambda=\exp\big\{(i\f\pi2 \f\vv{|\vv|}+\vv)\cdot\vsigma\big\}$, in which the boost is combined with a rotation of $\pi$ along its axis.

In both cases, we recover the linear simplicity constraints \Ref{simplLin} with normals $\cal N$ determined by the action
of $G\equiv \Lambda^{-1/2}$ or $\tl G\equiv \tl\Lambda^{-1/2}$, on $(1,0,0,0)$.
To see this, we take $G=e^{i\vv\cdot\vK}=e^{\f12\vv\cdot(\vJ^L-\vJ^R)}=e^{\f12\vv\cdot\vJ^L}e^{-\f12\vv\cdot\vJ^R}$ and compute its action on $\vJ_i^{L,R}$. Using the transformation properties \eqref{JLaction} and \eqref{JRaction}, we find
\bes
e^{i\vv\cdot\vK} \vartriangleright \vJ^L_i &=&
e^{+\f12\vv\cdot\vJ^L}\vartriangleright \vJ^L_i
\,=\,
\f12\la t_i | e^{-\vv\cdot\f{\vsigma}2} \vsigma e^{\vv\cdot\f{\vsigma}2} | u_i\ra,
\\
e^{i\vv\cdot\vK} \vartriangleright \vJ^R_i &=&
e^{-\f12\vv\cdot\vJ^R}\vartriangleright \vJ^R_i
\,=\,
\f12\la u_i | e^{+\vv\cdot\f{\vsigma}2} \vsigma e^{-\vv\cdot\f{\vsigma}2} | t_i\ra\,.
\ees
Then, using the map $\Lambda=e^{\vv\cdot\vsigma}$ between $t_i$ and $u_i$,
it is clear that we have:
\be
e^{i\vv\cdot\vK} \vartriangleright (\vJ^R_i - e^{i\theta}\, \vJ^L_i) = 0.
\ee
Namely, the boost maps all the 3-vectors $\vJ_i^L$ on the right counterparts $\vJ_i^R$, thus recovering the linear simplicity constraints. Indeed, translating into the real $\sl(2,\C)$-generators $\vJ_i=(\vJ^L_i+\vJ^R_i)/2$   and $i\vK_i=(\vJ^L_i-\vJ^R_i)/2$, we find \Ref{defC},
\be
e^{i\vv\cdot\vK} \vartriangleright (\vK_i+\,\tan\f\theta2\, \vJ_i)\,=\,0\,.
\ee

Analogously one proves that the flipped boost solution gives
\be
\tl G \vartriangleright (\vJ^R_i + e^{i\theta} \vJ^L_i) =0.
\ee
That is, although the holomorphic simplicity constraints \Ref{simple_def} distinguish the two sectors $\theta$ and $\theta+\pi$, the reconstruction of the polyhedron mixes them again. This seems to be a peculiar feature of the Lorentzian signature, as it does not happen in the Euclidean case \cite{simplicity2}. This will be further investigate elsewhere, in particular to find the right observable to distinguish the two sectors.

The results can be summarized in the following proposition.

\begin{prop}
\label{final}
If the holomorphic simplicity constraints
$\cC_{ij}=[t_i|t_j \ra- r^2e^{i\theta} [u_i|u_j\ra=0$ and the $\sl(2,\C)$ closure constraints $\sum_i \vJ_i=\sum_i\vK_i=0$ hold, then there exist  two (unique) $SL(2,\C)$ transformations, a pure boost $G$ and a (squared root of a) flipped boost $\tl G$, such that:
\be\label{linKL}
G\vartriangleright (\vK_i\,+\,\tan\f\theta2\,\vJ_i)\,=\,0 \quad \forall i, \qquad
\tl G\vartriangleright (\vK_i\,-\,\cot\f\theta2\,\vJ_i)\,=\,0 \quad \forall i.
\ee
In covariant notations, this statement translates into the existence of a (unique) common unit (and future-oriented) time-like vector $\cN_\mu$ for all bivectors $(J_i)_{IJ}$ such that
$$
\cN^I (J_i-\gamma \star J_i)_{IJ}=0 \quad \forall i, \qquad
\tl\cN^I (J_i+\f1\gamma \star J_i)_{IJ}=0 \quad \forall i,
$$
with $\gamma=\tan\f\theta2$. $\cN$ (resp. $\tl \cN$) is obtained through the action of the boost $G$ (resp. $\tl G$) on the time-like vector at the origin $(1,0,0,0)$.

\end{prop}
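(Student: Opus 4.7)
The plan is to chain the preceding Proposition with Lemma \ref{cool}, then exploit closure to identify the resulting spinorial intertwiner with (the square root of) a Lorentz boost.

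First, I would invoke the preceding Proposition to upgrade the hypotheses to the quadratic simplicity relation $\vJ^R_i\cdot\vJ^R_j=e^{2i\theta}\,\vJ^L_i\cdot\vJ^L_j$. Closure \Ref{clostu} forces at least one scalar product $[u_i|u_j\ra$ to be nonzero: otherwise all $u_i$ would be colinear, making $\sum_i|t_i\ra\la u_i|$ rank-one and contradicting its proportionality to $\id$. Applying Lemma \ref{cool} then produces a unique $\Lambda\in\SL(2,\C)$ with $|t_i\ra=\alpha\,\Lambda\,|u_i\ra$ for every $i$.

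Second, I would determine the structure of $\Lambda$ by inserting the intertwining relation back into \Ref{clostu}, which yields $\alpha\,\Lambda\,\sum_i|u_i\ra\la u_i|=\f12(\cA-i\cB)\,\id$. Since $\sum_i|u_i\ra\la u_i|$ is Hermitian and the right-hand side is a scalar multiple of $\id$, $\Lambda$ must satisfy $\Lambda^\dagger=\epsilon\,\Lambda$ with $\epsilon=\pm1$ fixed by \Ref{utpm}. For $\epsilon=+1$, the unit-determinant condition forces $\Lambda$ to be positive-definite Hermitian, hence a pure boost $\Lambda=e^{\vv\cdot\vsigma}$ with $\vv\in\R^3$; for $\epsilon=-1$, $\Lambda=iH$ with $H$ Hermitian of determinant $-1$, which rewrites as the flipped boost $\tl\Lambda=e^{(i\f\pi2\f\vv{|\vv|}+\vv)\cdot\vsigma}$.

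Third, I would set $G=\Lambda^{-1/2}$ and use the chiral exponentiated actions \Ref{JLaction}--\Ref{JRaction}, which act by opposite conjugations on the left and right sectors, to compute $G\vartriangleright\vJ^{L,R}_i$. Substituting $|t_i\ra=\alpha\Lambda|u_i\ra$ and $\la t_i|=\bar\alpha\la u_i|\Lambda$ into these expressions, the extra factors of $\Lambda=e^{\vv\cdot\vsigma}$ combine with $e^{\pm\vv\cdot\vsigma/2}$ to produce a common symmetric kernel in $u_i$, and the only residual difference between the two sectors is the phase $\alpha/\bar\alpha=e^{i\theta}$. This yields $G\vartriangleright\vJ^R_i=e^{i\theta}\,G\vartriangleright\vJ^L_i$, and the change of variables $\vJ=(\vJ^L+\vJ^R)/2$, $i\vK=(\vJ^L-\vJ^R)/2$ converts this identity into the first equation of \Ref{linKL}; the analogous manipulation with $\tl G=\tl\Lambda^{-1/2}$ gives the second. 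For the covariant form, I would set $\cN^I=G\vartriangleright(1,0,0,0)^I$: $\cN$ is automatically unit and future-directed since $G$ is positive Hermitian, and the trivial-normal identity $\vK_i+\g\vJ_i=0$, equivalent to $\cN^I(J_i-\g\star J_i)_{IJ}=0$ for $\cN=(1,0,0,0)$, propagates to general $\cN$ by Lorentz covariance.

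The key subtlety is the second step: one must cleanly match the two solution branches of the quadratic constraints (related at the linear level by $\g\mapsto-1/\g$) with the Hermitian versus anti-Hermitian alternatives for $\Lambda$, and hence with the pure versus flipped boost classes in $\SL(2,\C)$. Uniqueness of $G$ and $\tl G$, and consequently of the normals $\cN$ and $\tl\cN$, is inherited directly from the uniqueness of $\Lambda$ supplied by Lemma \ref{cool}.
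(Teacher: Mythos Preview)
Your proposal is correct and follows essentially the same route as the paper: apply Lemma~\ref{cool} to produce $\Lambda$, feed the intertwining relation back into the closure constraint to force $\Lambda^\dagger=\epsilon\Lambda$ (with $\epsilon$ coming from \Ref{utpm}), identify the two cases as pure versus flipped boost, set $G=\Lambda^{-1/2}$, and use \Ref{JLaction}--\Ref{JRaction} to compute $G\vartriangleright\vJ^{L,R}_i$ and obtain $G\vartriangleright\vJ^R_i=e^{i\theta}\,G\vartriangleright\vJ^L_i$. The only cosmetic difference is that the paper does not explicitly invoke the quadratic simplicity relation at the outset but rather pulls in the sign $\epsilon$ directly from the intermediate identity \Ref{utpm} established in the proof of the preceding Proposition.
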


The equivalence with the linear constraints \Ref{linKL} just proved, shows also that the holomorphic constraints $\cC_{ij}$ are not all independent.
In fact, there are only $2N-3$ independent constraints, because of the existence of the Pl\"ucker relations recalled earlier.\footnote{These  are the classical
counterpart to the Mandelstam relations defining the dimensionality of the intertwiner space. Our equivalence result implies that there are $(N-2)(N-3)$ independent 
Pl\"ucker relations.} 
A simple counting argument at this point is useful to highlight how in our proposal the familiar second class part of the quadratic constraints, briefly reviewed earlier, is exchanged for holomorphic constraints.
First of all, to simplify the discussion, let us perform the counting as if the quadratic constraints formed a closed algebra. Diagonal and off-diagonal simplicity constraints can be distinguished, and we have $2N$ real diagonal constraints (the usual matching of the left- and right-handed norms, plus the matching of the spinor phases -- these are a trivial addition due to the use of spinors instead of bi-vectors as fundamental variables), and the remaining $2(N-3)$ {real} off-diagonal constraints are second class. Our proposal would then be to trade the latter for $N-3$ holomorphic constraints. 
However, the true situation is a bit more complicated, because the algebra of quadratic constraints does not close, hence an infinite number of constraints arises, which are not taken into account in the above naive counting.
To solve this problem, what we are proposing is to abandon the distinction between diagonal and off-diagonal,\footnote{It should also be said that the distinction is anyway lost if one takes into account also the secondary constraints, since then all constraints do not Poisson-commute on shell, see e.g. \cite{Alexandrov}.}
and to treat all constraints on the same footing. Then, to replace the \emph{total} number of $4N-6$ real constraints by the new $2N-3$ independent holomorphic constraints.

To make the above counting more concrete, and connect with quantization, let us consider the familiar case of the tetrahedron, $N=4$. One could forget the problem that the algebra of quadratic constraints does not close, and focus on the 2 independent second class off-diagonal constraints. These can be traded for a single holomorphic constraint. In the quantum theory, it can be imposed \`a la Gupta-Bleuler, looking for coherent states which are eigenstates of the annihilation operators and thus authomatically impose the two constraints weakly. This is indeed achieved with the LS coherent intertwiners \cite{LS}, and it leads to the EPRL model \cite{EPR,LS2,FK,EPRL}. 
Conversely, one can use the \emph{closed} system of holomorphic constraints introduced in this paper. Then one is led to construct a new set of more complete coherent states, which take into account the full algebra of constraints. This is what was done in \cite{simplicity2} for the Euclidean case. The construction in the Lorentzian case will be the subject of the follow up paper \cite{SL2Cquant}.

Summarizing, we have defined  holomorphic simplicity constraints for a space of $\SL(2,\C)$ invariants which imply the ordinary quadratic and linear simplicity constraints. They are fully holomorphic in the fundamental spinorial variables, and commute with each other.
This extends to the Lorentzian case the previous results obtained in the Euclidean case \cite{simplicity1,simplicity2}.
The main differences with the Euclidean case is that we now have a boost and not a $\SU(2)$-rotation relating the 3-vectors in the self-dual and anti-self-dual sectors and that the Immirzi parameter $\gamma$ is free to run in the whole real line $\R$ without encountering some problem in $\pm 1$.

\subsection{Reduction from $\SL(2,\C)$ to $\SU(2)$-Intertwiners and the $Y$-map}\label{SecPoly}

Now that the algebraic structures are cleared up, let us look into the geometrical interpretation of our construction.
First, recall the meaning of the $\SU(2)$ case. We have $N$ spinors, say $\zeta_i$, that satisfy the closure constraint $\sum_i |\zeta_i\ra\la \zeta_i|\propto \id$. The constraints generate global $\SU(2)$ transformations on the spinors. Furthermore, these spinors determine 3-vectors $\vJ_i=\f12\la \zeta_i|\vsigma|\zeta_i\ra$ and the closure constraint then reads simply $\sum_i \vJ_i=0$. This determines a unique polyhedron embedded
in $\R^3$, which has  $N$ faces such that the vectors $\vJ_i$ are the normal vectors to each face (with the norm of the vector determining the area of the corresponding face). Details on the reconstruction of the polyhedron from the normal vectors can be found in \cite{polyhedron}.

In the $\SL(2,\C)$ case, the geometric picture is doubled. We have now a pair of spinors, $(u_i,t_i)$, for each face, corresponding to a bivector $J_i^{IJ}=(\vJ,\vK)\in\w^2\, \R^{3,1}$. The bivector represents the two-normal to the face embedded in Minkowski spacetime. Then the closure constraint $\sum_i J_i^{IJ}=0$ implies the existence of \emph{two} framed polyhedra,
corresponding to the closures of the self and antiself-dual sectors, $\sum_i \vJ^R_i=\sum_i \vJ^L_i=\vec{0}$
(or equivalently to $\sum_i \vJ_i=\sum_i\vK_i=\vec{0}$). 

The role of the simplicity constraints is then to identify the self and antiself-dual sectors
(up to a $\g$-dependent phase),
$\vJ^R_i = e^{i\theta} \vJ^L_i$.
This guarantees the existence of a unique polyhedron, determined by the bivectors all lying in the same spacelike 3d surface orthogonal to $\cN$, see proposition 2. The role of the Immirzi parameter is to determine the true area bivector as $B^{IJ}=(\id -\g \star) J^{IJ}$.

Our phase space description of polyhedra generalizes the work of Baez and Barrett \cite{Baez}, where a construction is given of the classical and quantum tetrahedron in $\R^4$, in three ways: inclusion of the Immirzi parameter, Lorentzian signature, and arbitrary valence. That is, we constructed the phase space for a classical
3d polyhedron in Minkowski spacetime.

\medskip

It is interesting to relate the SU(2) and $\SL(2,\C)$ structures at the level of an action principle for the phase spaces.
We start with the  phase space for $\SU(2)$ with the spinor variables $\zeta_i$ satisfying the closure constraint. As shown in \cite{spinor1,simplicity2}, an elegant way to represent this phase space structure is to encode it in the following action,
\be\label{Ssu2}
S[\zeta_i,\Theta]
\,\equiv\,
\int dt\,\sum_i \Big( -i\la \zeta_i|\pp_t \zeta_i\ra + \la \zeta_i|\Theta| \zeta_i\ra\Big),
\ee
where $t$ is an external time parameter with respect to which the spinors will evolve. The traceless (and Hermitian) 2$\times$2 matrix $\Theta$ is a Lagrange multiplier enforcing the closure constraint $\sum_i |\zeta_i\ra\la \zeta_i|$ generating the $\SU(2)$ gauge transformations.

We now define a map $Y$ that depends on an arbitrary boost $G\in\SL(2,\C)$, $G=G^\dagger$, and an arbitrary complex number $\beta\in\C$. We consider only the principal sector, with $\Lambda$ and thus $G$ pure boosts.
The 2$\times$2 matrix $G$ is of the type $\exp(\vec{x}\cdot\vsigma)$ with $\vec{x}\in\R^3$.
Our $Y$-map sends sets of spinors $\{\zeta_i\}$ satisfying the closure constraints to sets of pairs of spinors $\{(t_i,u_i)\}$ satisfying the $\sl(2,\C)$ closure constraints and the holomorphic simplicity constraints:
\be
Y_{G,\beta}: \{\zeta_i\}
\quad
\longmapsto
\quad
\bigg{\{}
|t_i\ra=\beta G\,|\zeta_i\ra
\,,\,
|u_i\ra=\beta^{-1}G^{-1}\,|\zeta_i\ra
\bigg{\}}
\ee
We first check the $\sl(2,\C)$ closure constraints using the fact that $G$ is Hermitian (by hypothesis, since it is a pure boost):
$$
\sum_i |u_i\ra\la t_i|
\,=\,
\f{\bbeta}{\beta}G^{-1}\,\sum_i |\zeta_i\ra\la \zeta_i|\,G^\dagger
\,=\,
\f{\bbeta}{\beta}\,\left(\f12\sum_i \la \zeta_i|\zeta_i\ra\right)\,G^{-1}\,G
\,\propto\,\id\,.
$$
We also easily check that the holomorphic simplicity constraints hold using that $G\in\SL(2,\C)$:
\be
\left\{
\begin{array}{l}
{[}t_i|t_j\ra \,=\, {\beta}^{2}  {[}\zeta_i|\zeta_j\ra {} \\
{[}u_i|u_j\ra \,=\, \beta^{-2}{} [\zeta_i|\zeta_j\ra
\end{array}
\right.
\quad\Rightarrow\quad
[t_i|t_j\ra\,=\,\beta^4\,[u_i|u_j\ra,
\ee
where $\alpha=\beta^2$ according to the definition \eqref{simple_def} of our holomorphic simplicity constraints.

Then we check that the phase space structure and the Poisson brackets between the $t$'s and the $u$'s is the appropriate one. In order to do so, we compute the kinematical term of our action:
\be
\la u_i|\pp_t t_i\ra+\la t_i|\pp_t u_i\ra
\,=\,
\left(\f{\beta}{\bbeta}+\f{\bbeta}{\beta}\right)\,\la \zeta_i|\pp_t \zeta_i\ra
\,+\,
\f{\beta}{\bbeta}\la \zeta_i| G^{-1}\pp_t G | \zeta_i\ra
\,+\,
\f{\bbeta}{\beta}\la \zeta_i| G\pp_t G^{-1} | \zeta_i\ra\,.
\ee
So it seems that we have to get rid of the term with derivative of the $G$-matrix. However, we now sum over the index $i$ and use the closure constraint $\sum_i |\zeta_i\ra\la \zeta_i|\propto\id$, then the terms in $G$ turn to be proportional to $\tr G^{-1}\pp_t G$, which vanishes since $\det G =1$. Hence,
\bes
S[\zeta_i,\Theta] &=& \int dt\, \sum_i \Big(-i\la \zeta_i|\pp_t \zeta_i\ra + \la \zeta_i|\Theta| \zeta_i\ra\Big) = \nn\\\nn
&=& \f{1}{\cos\f\theta 2}\, \int dt\, \sum_i \Big(-\f{i}2(\la u_i|\pp_t t_i\ra+\la t_i|\pp_t u_i\ra)
+ \la t_i|\Theta| u_i\ra + \la u_i|\tTheta| t_i\ra \Big) \,-\,\sum_{i,j}  \phi_{ij}\,\cC_{ij} \\
&=& S[t_i,u_i,\Theta,\tTheta,\phi],
\ees
where $\f\theta 2$ is the argument of $\alpha$ as before and the Lagrange multipliers $\phi$'s are anti-symmetric under $i\leftrightarrow j$. This equality holds due the lemma \ref{cool} which shows that the holomorphic  constraints on $t_i$ and $u_i$ do imply the existence of a boost $G$ relating the $t_i$'s to the $u_i$'s and thus such that the $Y$-map is invertible, $Y_{G,\beta}(\zeta_i)=(t_i,u_i)$.

This concludes the proof that the spinor phase space for $\SL(2,\C)$ with the holomorphic  constraints is isomorphic to the spinor phase space for $\SU(2)$. In the context of spin foam models, this is a desired feature: the Hilbert space of boundary states for a Lorentzian spin foam models with gauge group $\SL(2,\C)$ and satisfying the holomorphic simplicity constraints should be isomorphic to the Hilbert space of $\SU(2)$ spin networks. This is an important feature of the EPRL-FK models \cite{lift},
and it has been shown to apply also for the holomorphic constraints in the Euclidean case \cite{simplicity2}.

\section{Conclusions and Outlook}

We have introduced a phase space equipped with a complex structure and
a Hamiltonian action of $N$ copies of $\SL(2,\C)$.
It is given by $N$ twistors, or pairs of spinors, with canonical Poisson brackets, and a representation of the algebra generators quadratic in the spinors.
We computed the Casimirs of the algebra and showed that they can take arbitrary real values, thus the space is suitable to be quantized and carry the (infinite-dimensional) unitary representations of $\SL(2,\C)$, that are relevant in spin foam models.

Within this space, we have identified a complete set of global $\SL(2,\C)$ invariants as scalar products among the spinors, given by \Ref{ABFG}, and computed their Poisson algebra, given in \Ref{GLNC}.
Notably, the algebra \emph{closes}, unlike the one of the usual scalar products among generators, and contains a $\gl(N,\C)$ subalgebra.

Our work generalizes to $\SL(2,\C)$ the construction of the SU(2) algebra on twistor space appeared in \cite{twisted2,spinor2}, and the closed algebra of invariants appeared in \cite{un0,un1,spinor1}.

Next, we have defined the simplicity constraints, both in the quadratic and linear versions, exposing the fact that they are not holomorphic with respect to the spinorial variables.
The complex structure of the spinorial representation, and the control over a closed algebra of invariants, allow us to define a new set of simplicity constraints. While equivalent to the standard quadratic and linear constraints, the new ones have two key advantages: they are holomorphic, and they Poisson commute with each other. This is the main result of the paper.
Hence, the new holomorphic constraints are much more amenable to quantization than the usual ones. 
In particular, one can seek a weak imposition in terms of coherent states, similarly to what is done in the EPRL-FK models \cite{LS,LS2,FK}, but this time one will use explicitly the complex structure introduced by the spinors, and also the diagonal constraints will be imposed weakly.
This will be the subject of the follow-up work \cite{SL2Cquant}.

Our work extends to the Lorentzian case the previous results obtained in the Euclidean case \cite{simplicity1,simplicity2},
with the natural difference that we now have a boost and not a $\SU(2)$-rotation relating the 3-vectors in the self-dual and anti-self-dual sectors.
We also found a peculiar difference, with respect to the Euclidean case, in the mixing of the two sectors of solutions.

The results presented here provide a new light on the relation between simplicity constraints and polyhedra in Minkowski spacetime,
via spinors and twisted geometries. See also \cite{Wolfgang} and \cite{Johannes} for more details on covariant twisted geometries and twistors.

\appendix

\section{$\so(3,2)$ Algebra from a single Spinor}\label{AppA}

Starting from a single spinor $z$, we consider all the sets of real quadratic combinations of its components:
\bes
D &=& \f12\la z | z\ra\,, \\
\vJ(z)&=&\f12\la z|\vsigma|z\ra\,,\\
\vK(z)&=&\f14\Big{(}\la z|\vsigma|z]+[z|\vsigma|z\ra\Big{)}\,,\\
\vec M(z)&=&\f i4\left(\la z|\vsigma|z]-[z|\vsigma|z\ra\right)\,.
\ees
Together they form a closed algebra under Poisson bracket:
\be\nn
\{J_a(z),J_b(z)\}=\epsilon_{abc}J_c(z),\qquad
\{J_a(z),K_b(z)\}=\epsilon_{abc}K_c(z),\qquad
\{J_a(z),M_b(z)\}=\epsilon_{abc}M_c(z),
\ee
\be\nn
\{K_a(z),K_b(z)\}=-\epsilon_{abc}J_c(z),\qquad
\{M_a(z),M_b(z)\}=-\epsilon_{abc}J_c(z),\qquad
\{K_a(z),M_b(z)\}=\delta_{ab}D,
\ee
\be\nn
\{D,J_a(z)\}=0,\qquad
\{D,K_a(z)\}=-M_a(z),\qquad
\{D,M_a(z)\}=+K_a(z).
\ee
This algebra can be derived from the following Poisson brackets,
$$
\{\f12\la z|\sigma_a|z\ra,\la z|\sigma_b|z]\}
\,=\,
\f{-2i}2\la z|\sigma_a\sigma_b|z]
\,=\,
\f{-2i}2\la z|\delta_{ab}\id+i\epsilon_{abc}\sigma_c|z]
\,=\,
\epsilon_{abc}\la z|\sigma_c|z],
$$
as well as
$$
\{\f12\la z|\sigma_a|z\ra,[ z|\sigma_b|z\ra\}
\,=\,
\epsilon_{abc}[ z|\sigma_c|z\ra\,,\qquad
\{\la z|\sigma_a|z],[ z|\sigma_b|z\ra\}
\,=\,
4i\delta_{ab}\la z|z\ra-4\epsilon_{abc}\la z|\sigma_c|z\ra.
$$

The algebra generated by $\vJ,\vK,\vec M$ and $D$ can be recognized as an $\so(3,2)$ Lie algebra.
In covariant notation, we define the De Sitter generators $\cJ_{\mu\nu}$ on the 5-dimensional Minkowski space-time with signature (-+++-), via
\be\nn
J_a=\f12\epsilon_{abc}\cJ_{bc},\quad
K_a=\cJ_{0a},\quad
L_a=\cJ_{4a},\quad
D=\cJ_{04}\,.
\ee

Finally, notice that the three vector generators satisfy constraints:
\be\nn
\vJ^2=\vK^2=\vL^2=D^2,\qquad
\vJ\cdot\vK=\vJ\cdot\vL=\vK\cdot\vL=0\,,
\ee
which simply expresses that $(\vJ,\vK,\vL)$ form an orthonormal basis for 3-vectors in $\R^3$.

\section{Decomposition of a Bivector in Null Components} \label{AppB}

We start with an arbitrary bivector $B=(\vec{a},\vec{b})\in\R^6=\w^2\R^{3,1}$ and we want to write it as a sum of two null bivectors i.e:
\be
(\va,\vb)=(\vc,\vd)+(\ve,\vf) \qquad\textrm{with}\quad \vc\cdot\vd=\ve\cdot\vf,
\quad |c|=|d|,\quad |e|=|f|.
\ee
The original bivector is defined by 6 independent variables. The pair of null bivectors are determined by $12$ variables related by 4 constraints which reduce the number of degrees of freedom to 8. Thus we expect 2 free degrees of freedom in this decomposition, which would correspond to a free direction of one vector in $\R^3$.

Assuming that $\va$ and $\vb$ are not collinear, then $\va\w\vb\ne0$, and we take the following ansatz:
\be
\vc=\alpha\va+\beta\vb,\qquad \vd=\gamma \va\w\vb.
\ee
which is such that $\vc\cdot\vd=0$. The three remaining constraints are easily translated into equations on the coefficients $\alpha,\beta,\gamma$:
\bes
\alpha|a|^2+\beta \va\cdot\vb = \f{|a|^2-|b|^2}2, \qquad
\alpha\va\cdot\vb +\beta |b|^2 = \va\cdot\vb, \qquad
\gamma^2 |\va\w\vb|^2=|\alpha\va+\beta\vb|^2.\nn
\ees
The system is easily solved and there exists a unique solution for the three parameters $\alpha$, $\beta$ and $\gamma$. The extra freedom has actually been fixed when we chose  the vector $\vd$ to be parallel to $\va\w\vb$.

In the special case when $\va\w\vb=0$, we follow the same logic. Since $\va$ and $\vb$ are collinear, we write $\vb=\lambda\va$. Then we fix the direction of the vector $\vc$ (such that it is not collinear to $\va$) and determine its norm $|c|$ and the vector $\vd$.  We want to impose that:
\be
|c|=|d|,\quad \vc\cdot\vd=0,\quad |\va-\vc|=|\lambda\va-\vd|,\quad (\va-\vc)\cdot(\lambda\va-\vd)=0.
\ee
The two last constraints lead to:
\be
\va\cdot\vc =\f12|a|^2,\qquad \va\cdot\vd=\f\lambda 2|a|^2.
\ee
Once the direction of $\vc$ is fixed,
the equality on the scalar product $\va\cdot\vc$ actually fixes the norm of $|c|$. The three remaining constraints,  $|c|=|d|$, $\vc\cdot\vd=0$ and the equality on $\va\cdot\vd$ fix completely the vector $\vd$. The most straightforward way to see this is to decompose $\vd$ on the basis formed by $\va$, $\vc$ and $\va\w\vc$.



\begin{thebibliography}{99}



\bibitem{LS}
E.R. Livine and S. Speziale,
{\it A new spinfoam vertex for quantum gravity},
Phys.Rev.D76 (2007) 084028
[arXiv:0705.0674 [gr-qc]].

\bibitem{LS2}
E.~R.~Livine and S.~Speziale,
{\it Consistently Solving the Simplicity Constraints for Spinfoam Quantum Gravity},
Europhys.\ Lett.\  {\bf 81 } (2008)  50004, [arXiv:0708.1915 [gr-qc]].

\bibitem{FK}
L. Freidel and K. Krasnov,
{\it A New Spin Foam Model for 4d Gravity},
Class.Quant.Grav.25 (2008) 125018 [arXiv:0708.1595].

\bibitem{EPR}
J.~Engle, R.~Pereira, C.~Rovelli,
{\it The Loop-quantum-gravity vertex-amplitude},
Phys.\ Rev.\ Lett.\  {\bf 99 } (2007)  161301
[arXiv:0705.2388 [gr-qc]].

\bibitem{EPRL}
J. Engle, E.R. Livine, R. Pereira and C. Rovelli,
{\it LQG vertex with finite Immirzi parameter},
Nucl.Phys.B799 (2008) 136-149 [arXiv:0711.0146]

\bibitem{newlook}
C.~Rovelli,
{\it A new look at loop quantum gravity},
Class.\ Quant.\ Grav.\  {\bf 28 } (2011)  114005
[arXiv:1004.1780 [gr-qc]].
	
\bibitem{BFsimple}
R. De Pietri and L. Freidel,
{\it so(4) Plebanski Action and Relativistic Spin Foam Model},
Class.Quant.Grav.16 (1999) 2187-2196 [arXiv:gr-qc/9804071]

\bibitem{BCE}
J.W. Barrett and L. Crane,
{\it Relativistic spin networks and quantum gravity},
J.Math.Phys. 39 (1998) 3296-3302 [arXiv:gr-qc/9709028]

\bibitem{BCL}
J.W. Barrett and L. Crane,
{\it A Lorentzian Signature Model for Quantum General Relativity},
Class.Quant.Grav. 17 (2000) 3101-3118 [arXiv:gr-qc/9904025]

\bibitem{Barrett}
  J.~W.~Barrett, R.~J.~Dowdall, W.~J.~Fairbairn, F.~Hellmann and R.~Pereira,
  {\it Lorentzian spin foam amplitudes: Graphical calculus and asymptotics},
  Class.\ Quant.\ Grav.\  {\bf 27 } (2010)  165009
  [arXiv:0907.2440 [gr-qc]].

\bibitem{CF}
  F.~Conrady and L.~Freidel,
  {\it On the semiclassical limit of 4d spin foam models},
  Phys.\ Rev.\  {\bf D78 } (2008)  104023
  [arXiv:0809.2280 [gr-qc]].
		
\bibitem{Alexandrov}
  S.~Alexandrov,
  ``The new vertices and canonical quantization,''
  Phys.\ Rev.\ D {\bf 82} (2010) 024024
  [arXiv:1004.2260 [gr-qc]].
	
\bibitem{simplicity1}
M. Dupuis and E.R. Livine,
{\it Revisiting the  Constraints and Coherent Intertwiners},
Class. Quantum Grav. 28 (2011) 085001 [arXiv:1006.5666]

\bibitem{simplicity2}
M. Dupuis and E.R. Livine,
{\it Holomorphic  Constraints for 4d Spinfoam Models},
arXiv:1104.3683

\bibitem{Ding}
  Y.~Ding, M.~Han and C.~Rovelli,
  {\it Generalized Spinfoams},
  Phys.\ Rev.\  {\bf D83 } (2011)  124020
  [arXiv:1011.2149 [gr-qc]].

\bibitem{twisted1}
L. Freidel and S. Speziale,
{\it Twisted geometries: A geometric parametrisation of SU(2) phase space},
Phys.Rev.D82 (2010) 084040 [arXiv:1001.2748]

\bibitem{twisted2}
L. Freidel and S. Speziale,
{\it From twistors to twisted geometries},
Phys.Rev.D82 (2010) 084041 [arXiv:1006.0199]
	
\bibitem{un0}
F. Girelli and E.R. Livine,
{\it Reconstructing Quantum Geometry from Quantum Information: Spin Networks as Harmonic Oscillators},
Class.Quant.Grav. 22 (2005) 3295-3314 [arXiv:gr-qc/0501075]

\bibitem{un1}
L. Freidel and E.R. Livine,
{\it The Fine Structure of SU(2) Intertwiners from U(N) Representations},
Journal of Math. Phys. 51 (2010) 082502 [arXiv:0911.3553]

\bibitem{un2}
L. Freidel and E.R. Livine,
{\it U(N) Coherent States for Loop Quantum Gravity},
arXiv:1005.2090

\bibitem{un3}
E. Borja, J. Diaz-Polo, I. Garay and E.R. Livine,
{\it Dynamics for a 2-vertex Quantum Gravity Model},
Class. Quantum Grav. 27 (2010) 235010 [arXiv:1006.2451]

\bibitem{spinor1}
E. Borja, L. Freidel, I. Garay and E.R. Livine,
{\it U(N) tools for Loop Quantum Gravity: The Return of the Spinor},
arXiv:1010.5451

\bibitem{spinor2}
E.R. Livine and J. Tambornino,
{\it Spinor Representation for Loop Quantum Gravity},
arXiv:1105.3385

\bibitem{SL2Cquant}
M. Dupuis, L. Freidel, E.R. Livine and S. Speziale,
{\it $\SL(2,C)$ Intertwiners and Simplicity for Spinfoam Models}, to appear.

\bibitem{Johannes}
E.R. Livine, S. Speziale and J. Tambornino,
{\it Twistor networks and covariant twisted geometries},  arXiv:1108.0369 [gr-qc].

\bibitem{Wolfgang}
W. M. Wieland,
{\it Twistorial phase space for complex Ashtekar variables},  arXiv:1107.5002 [gr-qc].
  
\bibitem{polyhedron}
E. Bianchi, P. Dona and S. Speziale,
{\it Polyhedra in loop quantum gravity},
Phys.Rev.D83 (2011) 044035 [arXiv:1009.3402]

\bibitem{Penrose}
  R.~Penrose and W.~Rindler,
 {\it Spinors And Space-time. Vol. 2},
  Cambridge, Uk: Univ. Pr. ( 1986) 501p.
	
\bibitem{tesi}
  E.~R.~Livine,
  {\it A Short and Subjective Introduction to the Spinfoam Framework for Quantum Gravity},
  [arXiv:1101.5061 [gr-qc]]

\bibitem{Baez}
  J.~C.~Baez and J.~W.~Barrett,
  {\it The Quantum tetrahedron in three-dimensions and four-dimensions},
  Adv.\ Theor.\ Math.\ Phys.\  {\bf 3 } (1999)  815-850
  [gr-qc/9903060].

\bibitem{lift}
  M.~Dupuis and E.~R.~Livine,
  {\it Lifting SU(2) Spin Networks to Projected Spin Networks},
  Phys.\ Rev.\  {\bf D82 } (2010)  064044
  [arXiv:1008.4093 [gr-qc]].
\\
  C.~Rovelli, S.~Speziale,
  {\it Lorentz covariance of loop quantum gravity},
  Phys.\ Rev.\  {\bf D83 } (2011)  104029
  [arXiv:1012.1739 [gr-qc]].



\end{thebibliography}
\end{document}